\let\proof\@undefined
\let\endproof\@undefined
\newtheorem{thm}{Theorem}
\newtheorem{prop}{Proposition}
\theoremstyle{definition}
\theoremstyle{plain}
\declaretheorem[style=definition,qed=$\blacksquare$]{example}
\newtheorem{definition}{Definition}
\newtheorem{corollary}{Corollary}
\theoremstyle{remark}
\theoremstyle{definition}
\newcommand{\Domain}{\mathcal{X}}
\newcommand{\ul}{\underline}
\newcommand{\ol}{\overline}
\newcommand{\Kinf}{\mathcal{K}_\infty}
\newcommand*\wc{{\mkern 2mu\cdot\mkern 2mu}}
\title{Separability of Lyapunov Functions for Contractive Monotone Systems}
\author{Samuel Coogan\thanks{S. Coogan  is with the Electrical Engineering Department, University of California, Los Angeles. \texttt{scoogan@ucla.edu}.}}
\date{}
\renewcommand\footnotemark{}
\begin{document}
\maketitle

\vspace{-.3in}
\begin{abstract}
We consider constructing Lyapunov functions for systems that are both monotone and contractive with respect to a weighted one norm or infinity norm. This class of systems admits separable Lyapunov functions that are either the sum or the maximum of a collection of functions of a single argument. In either case, two classes of separable Lyapunov functions exist: the first class is separable along the system's state, and the second class is separable along components of the system's vector field. The latter case is advantageous for many practically motivated systems for which it is difficult to measure the system's state but easier to measure the system's velocity or rate of change. We provide several examples to demonstrate our results.
\end{abstract}

\section{Introduction}
A dynamical system is \emph{monotone} if it maintains a partial ordering of states along trajectories of the system \cite{Hirsch:1983lq,Hirsch:1985fk, Smith:2008fk}.
Monotone systems exhibit structure and ordered behavior that leads to techniques for analysis and control, \emph{e.g.}, \cite{Angeli:2003fv, Angeli:2004qy,Coogan:2014ty}. Examples of monotone systems include certain biological networks \cite{Sontag:2007ad} and transportation networks \cite{Gomes:2008fk, Lovisari:2014yq, coogan2015compartmental}, and monotone systems theory has been applied to large-scale analysis and distributed control \cite{Rantzer:2012fj, Dirr:2015rt}. 

A dynamical system is \emph{contractive} if the distance between states along any pair of trajectories is exponentially decreasing \cite{Pavlov:2004lr, LOHMILLER:1998bf, Sontag:2010fk,Forni:2012qe}. When an equilibrium exists, contraction implies global convergence and a Lyapunov function is given by the distance to the equilibrium. The magnitude of the vector field provides an alternative Lyapunov function.  

Certain classes of monotone systems have been shown to be also contractive with respect to non-Euclidean norms. For example, \cite{Margaliot:2012hc, Margaliot:2014qv, Raveh:2015wm} study a model for gene translation which is monotone and contractive with respect to a weighted $\ell_1$ norm. A closely related result is obtained for transportation flow networks in \cite{Coogan:2014ph, Como:2015ne}. In \cite{Coogan:2014ph}, a Lyapunov function defined as the magnitude of the vector field is used, while a Lyapunov function based on the distance of the state to the equilibrium is used in \cite{Como:2015ne}.

In this paper, we study monotone systems that are contractive with respect to a weighted $\ell_1$ norm or $\ell_\infty$ norm. We first provide sufficient conditions establishing contraction for monotone systems in terms of strict negativity of scaled row or column sums of the Jacobian matrix for the system. These conditions follow naturally from sufficient conditions for monotonicity and for contraction. 
Next, we derive sum-separable and max-separable Lyapunov functions based on the contraction metric. In particular, we introduce the notion of Lyapunov functions that are separable along components of the vector field. This is especially relevant for certain classes of systems such as multiagent control systems or flow networks where it is often more practical to measure velocity or flow rather than position or state.   Additionally, we present results of independent interest for proving asymptotic stability and obtaining Lyapunov functions of systems that are \emph{nonexpansive} with respect to a particular vector norm, \emph{i.e.}, the distance between states along any pair of trajectories does not increase. Finally, we draw connections between our results and related results, particularly small-gain theorems for interconnected, nonlinear systems.

For linear monotone systems, also called \emph{positive} systems, scalable stability verification is possible by appealing to linear Lyapunov functions or scaled componentwise-maximum Lyapunov functions \cite{Rantzer:2012fj}. We extend stability verification results from \cite{Rantzer:2012fj} to nonlinear monotone systems by appealing to contraction theory. Separable Lyapunov functions for nonlinear monotone systems are also studied in \cite{Dirr:2015rt}, however, a contraction theoretic approach is not considered. We also introduce a novel notion of flow separable Lyapunov functions not considered in \cite{Dirr:2015rt}.%

This paper is organized as follows. Section \ref{sec:notation} establishes notation and Section \ref{sec:problem-setup} provides the problem setup. Section \ref{sec:contract} reviews contraction theory and provides a novel approach for establishing asymptotic stability for nonexpansive systems. Section \ref{sec:contr-monot-syst} provides the main results, and illustrative examples are considered in Section \ref{sec:examples}. We provide discussion in Section \ref{sec:disc-comp-exist} and concluding remarks in Section \ref{sec:conclusions}.

\section{Notation}
\label{sec:notation}
A matrix $A\in\mathbb{R}^{n\times n}$ is \emph{Metzler} if all of its off diagonal components are nonnegative. All inequalities are interpreted elementwise. %
The vector of all ones is denoted by $\mathbf{1}$. For scalar functions of one variable, we denote derivative with the prime notation $'$. The $\ell_1$ and $\ell_\infty$ norms are denote by $|\cdot|_1$ and $|\cdot|_\infty$ respectively, that is, $|x|_1=\sum_{i=1}^n|x_i|$ and $|x|_\infty=\max_{i=1,\ldots,n}|x_i|$ for $x\in\mathbb{R}^n$.

\section{Problem Setup}
\label{sec:problem-setup}
We consider the nonlinear autonomous dynamical system
\begin{align}
  \label{eq:1}
  \dot{x}=f(x)
\end{align}
for $x\in\Domain\subseteq \mathbb{R}^n$ and continuously differentiable $f(\cdot)$. Let $f_i(x)$ indicate the $i$th component of $f$ and denote the Jacobian as  $J(x)= \frac{\partial f}{\partial x}(x)$.

Denote by $\phi(t,x_0)$ the solution to \eqref{eq:1} at time $t$ when the system is initialized with state $x_0$ at time $0$. We assume that \eqref{eq:1} is forward complete and $\Domain$ is forward invariant for \eqref{eq:1}, thus $\phi(t,x_0)\in \Domain$ for all $t\geq 0$ and all $x_0\in\Domain$.  In this paper, we consider $\Domain=\mathbb{R}^n$ or $\Domain=\mathbb{R}^n_{\geq 0}:=\{x\in\mathbb{R}^n\mid x\geq 0\}$.

Consider a forward invariant set $K\subset \Domain$ with $x^*\in K$ an equilibrium for \eqref{eq:1} for which the domain of attraction includes $K$.  Let  $V:K\to\mathbb{R}_{\geq 0}$ be a \emph{Lyapunov function} for \eqref{eq:1} on $K$ that establishes asymptotic stability of $x^*$, that is: $V(x)$ is continuous and $V(x)=0$ for $x\in K$ if and only if $x=x^*$; $V(x)$ is radially unbounded; and $V(\phi(t,x_0))$ is a nonincreasing function of $t$ and $\lim_{t\to\infty}V(\phi(t,x_0))=0$ for all $x_0\in K$.  In this paper, we consider non-differentiable Lyapunov functions, for which standard Lyapunov theory can be extended using generalized derivatives \cite{Dirr:2015rt}.

The Lyapunov function $V(x)$ is \emph{state sum-separable} if %
\begin{align}
  \label{eq:46}
  V(x)=\sum_{i=1}^nV_i(x_i)
\end{align}
for a collection of functions $V_i$. It is \emph{state max-separable} if
\begin{align}
  \label{eq:47}
    V(x)=\max_{i=1,\ldots,n}V_i(x_i).
\end{align}
Lyapunov functions decomposable as \eqref{eq:46} or \eqref{eq:47} are considered in \cite{Dirr:2015rt}. In this paper, we also consider Lyapunov functions that are separable based on the dynamics of \eqref{eq:1}. If %
\begin{align}
  \label{eq:48}
  V(x)&=\sum_{i=1}^nW_i(f_i(x)),\quad \text{or}\\
  V(x)&=\max_{i=1,\ldots,n}W_i(f_i(x))
\end{align}
for some collection of functions $W_i$, we say that $V$ is, respectively, \emph{flow sum-separable} and \emph{flow max-separable}.

Except in Section \ref{sec:contract}, we assume \eqref{eq:1} is monotone:
\begin{definition}
\label{def:mon}
The system \eqref{eq:1} is \emph{monotone} if the dynamics maintain a partial order on solutions, that is,
\begin{align}
  \label{eq:3}
  x_0\leq y_0 \implies \phi(t,x_0)\leq \phi(t,y_0) \quad \forall t\geq 0.
\end{align}  
\end{definition}
In this paper, monotonicity is defined with respect to the positive orthant, although it is common to consider monotonicity with respect to other cones \cite{Angeli:2003fv}.

\begin{prop}[{\cite[Ch. 3.1]{Smith:2008fk}}]
\label{prop:mono}
The system \eqref{eq:1} is monotone if and only if the Jacobian $J(x)$ is Metzler for all $x\in\Domain$.  
\end{prop}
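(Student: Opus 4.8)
The plan is to prove the two implications separately, relying throughout on the standing assumptions that $f$ is continuously differentiable and that \eqref{eq:1} is forward complete with $\Domain$ forward invariant.

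For necessity, suppose \eqref{eq:1} is monotone and fix $x\in\Domain$ together with indices $i\neq j$. Let $e_j$ denote the $j$th standard basis vector and, for small $\epsilon>0$, set $x_0=x$ and $y_0=x+\epsilon e_j$; since $\epsilon>0$ we have $x_0\le y_0$ and $y_0\in\Domain$ for either choice of domain. Monotonicity \eqref{eq:3} then gives $\phi_i(t,y_0)\ge\phi_i(t,x_0)$ for all $t\ge 0$. Because $x_0$ and $y_0$ agree in the $i$th coordinate, the difference $\phi_i(t,y_0)-\phi_i(t,x_0)$ vanishes at $t=0$ and is nonnegative for $t>0$, so its right-hand time derivative at $t=0$ is nonnegative; evaluating this derivative through \eqref{eq:1} yields $f_i(x+\epsilon e_j)-f_i(x)\ge 0$. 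Dividing by $\epsilon$ and letting $\epsilon\to 0^+$ produces $\tfrac{\partial f_i}{\partial x_j}(x)\ge 0$, i.e. $J(x)$ is Metzler.

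For sufficiency, suppose $J(x)$ is Metzler everywhere and take $x_0\le y_0$ in $\Domain$. Writing $z(t)=\phi(t,y_0)-\phi(t,x_0)$ and applying the integral form of the mean value theorem to $f$ along the segment joining the two trajectories, I would express $\dot z=A(t)z$, where $A(t)=\int_0^1 J\big(\phi(t,x_0)+s\,z(t)\big)\,ds$ is itself Metzler, since an average of Metzler matrices is Metzler. The claim then reduces to the key fact that a linear time-varying system with Metzler coefficient matrix leaves the nonnegative orthant invariant: $z(0)\ge 0$ implies $z(t)\ge 0$ for all $t\ge 0$.

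To establish this invariance I would first shift the dynamics: on any compact time interval the diagonal of $A(t)$ is bounded, so there is a constant $c>0$ with $B(t):=A(t)+cI\ge 0$ entrywise, and the change of variables $w(t)=e^{ct}z(t)$ converts the system into $\dot w=B(t)w$ with $B(t)$ entrywise nonnegative and $w(0)=z(0)\ge 0$. Nonnegativity of $w$, and hence of $z$, then follows because the state transition matrix of $\dot w=B(t)w$, represented by its Peano--Baker series $I+\int_0^t B(s)\,ds+\int_0^t B(s)\int_0^s B(r)\,dr\,ds+\cdots$, is a sum of entrywise nonnegative matrices and is therefore nonnegative. I expect the sufficiency direction, and specifically this orthant-invariance step, to be the main obstacle: the reduction to the variational equation is routine, but care is needed to justify nonnegativity of the transition matrix, equivalently to rule out a component of $w$ crossing zero from above, using that at any such crossing $\dot w_k=\sum_j B_{kj}w_j\ge 0$. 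Extending the conclusion from compact intervals to all $t\ge 0$ is then immediate by forward completeness.
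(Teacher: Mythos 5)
Your proof is correct, but note that the paper itself gives no argument for this proposition at all: it is quoted from Smith \cite[Ch.~3.1]{Smith:2008fk}, so any proof is necessarily your own route. Your necessity direction (perturb the $j$th coordinate, use that the $i$th coordinates agree at $t=0$, differentiate the ordering) is the standard argument and is fine. For sufficiency you linearize along the segment joining the two trajectories to get $\dot z = A(t)z$ with $A(t)$ Metzler, and then prove invariance of the nonnegative orthant for this linear time-varying system via the Peano--Baker series of the shifted system $\dot w = (A(t)+cI)w$; this is clean and complete. Two remarks. First, the integral mean value theorem requires the segment $\phi(t,x_0)+s\,z(t)$, $s\in[0,1]$, to lie in $\Domain$; this holds here because the paper only considers $\Domain=\mathbb{R}^n$ or $\Domain=\mathbb{R}^n_{\geq 0}$, both convex, but you should say so explicitly, since for non-convex domains this step fails and the classical treatment must proceed differently. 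Second, your parenthetical fallback --- ruling out a zero crossing because $\dot w_k=\sum_j B_{kj}w_j\geq 0$ there --- is not by itself a proof (a nonnegative derivative at a touching point does not preclude crossing without an added perturbation such as $\dot w = B(t)w+\delta\mathbf{1}$ and a limit $\delta\to 0^+$); fortunately the Peano--Baker argument carries the full weight, so this aside is harmless. By comparison, the proof in the cited source establishes the Kamke--M\"uller quasimonotonicity condition ($x\leq y$ and $x_i=y_i$ imply $f_i(x)\leq f_i(y)$, which for $C^1$ vector fields is equivalent to the Metzler property) and derives order preservation by comparing with perturbed vector fields and letting the perturbation vanish; that route needs less smoothness and less geometry from the domain, while yours is more self-contained and elementary in the smooth, convex setting this paper actually works in.
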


\section{Infinitesimal Contraction}
\label{sec:contract}
We now review infinitesimal contraction for autonomous dynamical systems. %
We again consider the system given in \eqref{eq:1} but momentarily disregard the assumption that the system is monotone. Let $|\wc|$ be a vector norm on $\mathbb{R}^n$ and let $\|\wc \|$ be its induced matrix norm on $\mathbb{R}^{n\times n}$. The corresponding \emph{matrix measure} of the matrix $A\in\mathbb{R}^{n\times n}$ is defined as
\begin{align}
  \label{eq:5}
  \mu(A):= \lim_{h\to 0^+}\frac{\|I+hA\|-1}{h}.
\end{align}

\begin{prop}
\label{thm:contract}
  Let $K\subseteq \Domain$ be convex and forward-invariant. If, for some $c\in\mathbb{R}$,
  \begin{align}
    \label{eq:15}
      \mu(J(x))\leq c \quad \forall x\in K,
  \end{align}
then for any two solutions $x(t)=\phi(t,x_0)$ and $y(t)=\phi(t,y_0)$ for $x_0,y_0\in K$ it holds that, for all $t\geq 0$,
  \begin{align}
    \label{eq:7}
    |x(t)-y(t)|&\leq e^{ct}|x_0-y_0|, \quad \text{and}\\
  \label{eq:8}
  |f(x(t))|&\leq e^{ct}|f(x_0)|.
\end{align}
\end{prop}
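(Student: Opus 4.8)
The plan is to reduce both inequalities to a single infinitesimal estimate on the growth of the norm of a solution to a time-varying linear variational equation---the classical Coppel-type bound $D^+|v(t)|\le \mu(A(t))\,|v(t)|$ whenever $\dot v = A(t)v$---and then to integrate this differential inequality by a Grönwall comparison argument. Here $D^+$ denotes the upper right Dini derivative, which I use because $t\mapsto |v(t)|$ need not be differentiable even when $v$ is smooth.

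First I would establish this infinitesimal estimate. Writing $v(t+h)=v(t)+h\dot v(t)+o(h)=(I+hA(t))v(t)+o(h)$ and applying the triangle inequality with the definition of the induced matrix norm gives $|v(t+h)|\le \|I+hA(t)\|\,|v(t)|+o(h)$. Subtracting $|v(t)|$, dividing by $h>0$, and taking $\limsup_{h\to 0^+}$ produces, directly from the definition \eqref{eq:5} of the matrix measure, the bound $D^+|v(t)|\le \mu(A(t))\,|v(t)|$.

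With this in hand, the second inequality \eqref{eq:8} follows quickly. Set $w(t)=f(x(t))=\dot x(t)$; differentiating along the flow gives $\dot w(t)=J(x(t))\,w(t)$, so $w$ solves the variational equation with $A(t)=J(x(t))$. Since $x(t)\in K$ for all $t\ge 0$ by forward invariance, hypothesis \eqref{eq:15} yields $\mu(A(t))\le c$, so the estimate gives $D^+|w(t)|\le c\,|w(t)|$ and Grönwall delivers $|f(x(t))|=|w(t)|\le e^{ct}|w(0)|=e^{ct}|f(x_0)|$. For the first inequality \eqref{eq:7}, set $z(t)=x(t)-y(t)$; the extra ingredient is to write $f(x(t))-f(y(t))$ as a linear map acting on $z(t)$. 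By the fundamental theorem of calculus along the segment joining $y(t)$ to $x(t)$---which lies in $K$ by convexity and forward invariance---one obtains $\dot z = A(t)z$ with $A(t)=\int_0^1 J\bigl(y(t)+s\,z(t)\bigr)\,ds$. Using subadditivity and positive homogeneity of the matrix measure to pass $\mu$ inside the integral, $\mu(A(t))\le \int_0^1 \mu\bigl(J(y(t)+sz(t))\bigr)\,ds\le c$, and the same estimate plus Grönwall yields \eqref{eq:7}.

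I anticipate the main obstacle to be the averaging step in the first inequality: one must justify expressing $f(x(t))-f(y(t))$ as an averaged-Jacobian times $z(t)$, which is precisely where convexity of $K$ is needed so that the connecting segment stays in the domain, and one must invoke the convexity (subadditivity and positive homogeneity) of the matrix measure to carry the pointwise bound \eqref{eq:15} through the integral. The Dini-derivative form of Grönwall's inequality, applied to the possibly nondifferentiable map $t\mapsto|v(t)|$, is the other point requiring care, though it is standard.
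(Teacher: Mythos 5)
Your proof is correct, and it unifies both inequalities under a single Coppel-type lemma $D^+|v(t)|\le\mu(A(t))\,|v(t)|$ for solutions of $\dot v=A(t)v$. For \eqref{eq:8} this is essentially the paper's own argument: the paper likewise differentiates $V(x(t))=|f(x(t))|$ along the flow, uses $\dot f=J(x)f(x)$ together with the definition \eqref{eq:5} of $\mu$, and integrates the resulting differential inequality; it even remarks afterward that \eqref{eq:8} can be obtained from Coppel's inequality, which is exactly your lemma. The genuine difference is in \eqref{eq:7}: the paper does not prove it but defers to \cite{Sontag:2010fk}, where the standard argument parameterizes the initial condition along the segment $s x_0+(1-s)y_0$, applies the variational equation to the sensitivity $\partial\phi/\partial s$, and integrates the pointwise bound over $s\in[0,1]$. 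You instead reduce to a single linear time-varying system $\dot z=A(t)z$ with the averaged Jacobian $A(t)=\int_0^1 J\bigl(y(t)+s\,z(t)\bigr)\,ds$ and pass $\mu$ through the integral by subadditivity and positive homogeneity (a Jensen-type step, legitimate since $\mu$ is convex and Lipschitz, indeed $|\mu(A)-\mu(B)|\le\|A-B\|$). Both routes invoke convexity of $K$ at the same point, namely to keep the segment joining $x(t)$ and $y(t)$ inside $K$ where \eqref{eq:15} holds, and both are sound; yours buys a self-contained treatment of \eqref{eq:7} and a single lemma covering both claims, at the modest cost of justifying the averaged-Jacobian representation and the Jensen-type inequality for $\mu$.
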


\begin{proof}
A proof for condition \eqref{eq:7} is found in \cite{Sontag:2010fk} where it is assumed that $c<0$ although the proof holds without this assumption.

 Considering \eqref{eq:8}, let $V(x)\triangleq |f(x)|$. $V(x(t))$ is then absolutely continuous as a function of $t$ and therefore
    \begin{align}
      \label{eq:204}
      \dot{V}(x(t))&\triangleq \lim_{h\to 0^+}\frac{V(x(t+h))-V(x(t))}{h}%
    \end{align}
for almost all $t$. Furthermore,
\begin{align}
  \label{eq:208}
&\lim_{h\to 0^+}\bigg|\frac{|f(x(t+h))|-|f(x)+h\dot{f}(x)|}{h}\bigg|\\
&\leq \lim_{h\to 0^+}\left|\frac{f(x(t+h))-f(x)}{h}-\dot{f}(x)\right|\\
\label{eq:208-2}&=0
\end{align}
where we use the definition of $\dot{f}(x)$ and the fact $\big||x|-|y|\big|\leq |x-y|$. Since also $\dot{f}(x)=J(x)f(x)$, we combine \eqref{eq:204}--\eqref{eq:208-2} and obtain
\begin{align}
  \label{eq:206}
  \dot{V}(x(t))&=\lim_{h\to 0^+}\frac{|f(x)+hJ(x)f(x)|-|f(x)|}{h}\\
&\leq\lim_{h\to 0^+} \frac{\Vert I+hJ(x)\Vert \cdot |f(x)|-|f(x)|}{h}\\
&=\lim_{h\to 0^+} \frac{\Vert I+hJ(x)\Vert-1}{h}|f(x)|\\
&=\mu(J(x))V(x).
\end{align}
By hypothesis, we then have $\dot{V}(x)\leq c V(x)$, and \eqref{eq:8} follows by integration.
\end{proof}
  It is also possible to obtain \eqref{eq:8} using Coppel's inequality; see, \emph{e.g.}, \cite[Section 2.5, Theorem 3]{Vidyasagar:2002ly} for a statement and proof of the inequality.

\begin{definition}
The system \eqref{eq:1} is \emph{infinitesimally contracting} on $K$ with respect to the norm $|\wc |$ if \eqref{eq:15} holds for some $c<0$.
\end{definition}
 If the system is infinitesimally contracting, then $|f(x)|$ decays to zero at an exponential rate by \eqref{eq:8}, and therefore each trajectory converges to a finite equilibrium. Since \eqref{eq:7} precludes the existence of more than one equilibrium, we conclude that there exists a unique equilibrium, it is asymptotically stable, and the domain of attraction includes $K$.  Moreover, Proposition \ref{thm:contract} provides two possible Lyapunov functions defined in terms of the norm $|\wc|$. Namely, if $x^*$ is the unique equilibrium, then $V(x)=|x-x^*|$ and $V(x)=|f(x)|$ are both Lyapunov functions for \eqref{eq:1}.

For the $\ell_1$ norm with induced matrix norm $\|A\|_1=\max_{j}\sum_{i}|A_{ij}|$, the induced matrix measure is given by
\begin{align}
  \label{eq:56}
  \textstyle \mu_1(A)=\max_{j}\left(A_{jj}+\sum_{i\neq j}|A_{ij}|\right)
\end{align}
for any $A\in\mathbb{R}^{n\times n}$. Likewise, for the $\ell_\infty$ norm with induced matrix norm is $\|A\|_\infty=\max_{i}\sum_{j}|A_{ij}|$, the induced matrix measure is given by
\begin{align}
  \label{eq:58}
  \textstyle   \mu_\infty(A)=\max_{i}\left(A_{ii}+\sum_{j\neq i}|A_{ij}|\right).
\end{align}
See, \emph{e.g.}, \cite[Section II.8, Theorem 24]{Desoer:2008bh}, for a derivation of the induced matrix measures for common vector norms. The matrix measures given in \eqref{eq:56} and \eqref{eq:58} provide the connection to sum-separable and max-separable Lyapunov functions which are the focus of this paper. 

We are particularly interested in vector norms and matrix measures that arise from a scaling of another norm. Let $|\wc|_{*}$ be some particular vector norm and let $\mu_*(\cdot)$ be its induced matrix measure. If $P\in\mathbb{R}^{n\times n}$ is nonsingular, then we define a new vector norm by $|x|_{*,P}:=|Px|_{*}$ for which the induced matrix measure satisfies
\begin{align}
  \label{eq:11}
 \mu_{*,P}(A)=\mu_*(PAP^{-1}).  
\end{align}
When $P=\text{diag}(v)$ for some $v\in\mathbb{R}^n$, we use the notation $|\wc|_{*,v}$ and $\mu_{*,v}(\cdot)$ instead, where $\text{diag}(v)$ denotes the $n\times n$ matrix with $v$ on the diagonal and zeros elsewhere.

For some classes of systems, it is only possible to establish $\mu(J(x))\leq 0$ for all $x\in K$. In this case, \eqref{eq:15}--\eqref{eq:7} implies a nonexpansion property. Furthermore, it may still be possible to demonstrate asymptotic stability of an equilibrium using contraction theory.

\begin{thm}
\label{thm:ne}
  Let $K\subseteq \Domain$ be convex and forward-invariant and let $x^*\in K$ be an equilibrium of \eqref{eq:1}. If
  \begin{align}
    \label{eq:21}
    \mu(J(x))&\leq 0 \ \forall x\in K\qquad \text{ and} \qquad\mu(J(x^*))<0
  \end{align}
then $x^*$ is asymptotically stable, the domain of attraction includes $K$, and $V(x)=|x-x^*|$ and $V(x)=|f(x)|$ are Lyapunov functions.

\end{thm}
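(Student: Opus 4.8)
The plan is to read off the two nonexpansion estimates of Proposition~\ref{thm:contract} and then promote the nonexpansion of $V(x)=|x-x^*|$ to actual convergence by using the strict bound $\mu(J(x^*))<0$ together with convexity of $K$. Since $\mu(J(x))\le 0$ on $K$ by \eqref{eq:21}, Proposition~\ref{thm:contract} applies with $c=0$; taking the two solutions in \eqref{eq:7} to be $x(t)=\phi(t,x_0)$ and the constant solution $y(t)\equiv x^*$ gives $|x(t)-x^*|\le|x_0-x^*|$, so $V(x)=|x-x^*|$ is nonincreasing, and \eqref{eq:8} shows $V(x)=|f(x)|$ is nonincreasing as well. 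Nonincrease of $|x-x^*|$ already yields Lyapunov stability of $x^*$; what remains is to prove attraction toward $x^*$ from every $x_0\in K$.

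For the attraction, first use continuity of $x\mapsto J(x)$ and of $\mu(\wc)$ to find $\delta>0$ and $c'>0$ such that $\mu(J(x))\le -c'$ on the convex set $B:=\{x\in K:|x-x^*|\le\delta\}$, which exists because $\mu(J(x^*))<0$. Fix $x_0\in K$, set $R:=|x_0-x^*|$, and recall $|x(t)-x^*|\le R$ for all $t$. Writing the segment $z_t(s):=x^*+s(x(t)-x^*)$, $s\in[0,1]$, which lies in $K$ by convexity, the mean-value representation $f(x(t))=f(x(t))-f(x^*)=A(t)(x(t)-x^*)$ with $A(t):=\int_0^1 J(z_t(s))\,ds$ and the same computation as in the proof of Proposition~\ref{thm:contract} give $\frac{d}{dt}|x(t)-x^*|\le\mu(A(t))|x(t)-x^*|$ for almost every $t$. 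Subadditivity (convexity) of the matrix measure yields $\mu(A(t))\le\int_0^1\mu(J(z_t(s)))\,ds$, and since $|z_t(s)-x^*|=s|x(t)-x^*|$, the sub-segment $s\in[0,s_0]$ with $s_0=\min\{1,\delta/|x(t)-x^*|\}\ge\min\{1,\delta/R\}=:\bar s>0$ lies in $B$ where the integrand is at most $-c'$, while elsewhere it is at most $0$. Hence $\mu(A(t))\le -c'\bar s=:-\beta<0$ for all $t$, so $\frac{d}{dt}|x(t)-x^*|\le-\beta|x(t)-x^*|$ and $|x(t)-x^*|\le e^{-\beta t}R\to 0$. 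Thus $x(t)\to x^*$, the domain of attraction contains $K$, and $V(x)=|x-x^*|$ satisfies the Lyapunov conditions.

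For $V(x)=|f(x)|$, it is nonincreasing by the first step and tends to $|f(x^*)|=0$ by continuity since $x(t)\to x^*$; moreover, global attraction forces $x^*$ to be the only equilibrium in $K$, so $|f(x)|=0$ on $K$ if and only if $x=x^*$, and $V(x)=|f(x)|$ is a Lyapunov function as well.

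The main obstacle is precisely this attraction step: the hypotheses give $\mu(J)\le 0$ only pointwise on $K$ and strict negativity at the single point $x^*$, so Proposition~\ref{thm:contract} cannot be invoked to obtain contraction directly. The resolution is to integrate $\mu(J)$ along the segment joining $x^*$ to $x(t)$—whose length is bounded by the nonexpansion already established—and to observe that a fixed fraction $\bar s>0$ of that segment stays inside the strictly contracting neighborhood $B$; this is where convexity of $K$ and subadditivity of $\mu$ are indispensable.
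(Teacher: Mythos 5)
Your proof is correct, and it reaches the conclusion by a route genuinely different from the paper's. The paper deduces Theorem \ref{thm:ne} as a special case of Theorem \ref{thm:appendix} in the appendix, whose proof is discrete-time: it studies the period map $P(\xi)=\phi(T,\xi)$, and for initial conditions far from $x^*$ it inserts a waypoint $\sigma(r_\epsilon)$ on the segment joining $x^*$ to $\xi$ at distance $\epsilon$ from $x^*$; nonexpansion \eqref{eq:7} controls $|\phi(T,\xi)-\phi(T,\sigma(r_\epsilon))|$, strict contraction inside the $\epsilon$-ball controls $|\phi(T,\sigma(r_\epsilon))-x^*|$, and additivity of the norm along the segment plus the triangle inequality give a uniform decrease $\delta=(1-e^{-c\tau})\epsilon$ per period; hence every trajectory enters the contraction ball in finitely many steps and then converges geometrically. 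You instead stay in continuous time: writing $f(x(t))=A(t)(x(t)-x^*)$ with $A(t)$ the Jacobian averaged along the segment (convexity of $K$ enters here, just as it does for the paper's waypoint), and invoking subadditivity and positive homogeneity of the matrix measure to get $\mu(A(t))\le\int_0^1\mu\bigl(J(z_t(s))\bigr)\,ds$, you observe that a fixed fraction $\bar s=\min\{1,\delta/R\}$ of the segment always lies in the strict-contraction ball, which yields the single differential inequality $\frac{d}{dt}|x(t)-x^*|\le-c'\bar s\,|x(t)-x^*|$ and exponential convergence directly. What each approach buys: the paper's argument uses only the triangle inequality and the two estimates of Proposition \ref{thm:contract}, and it extends verbatim to the periodic time-varying setting of Theorem \ref{thm:appendix}, where strict contraction is available only in a tube around $\gamma(t)$ during part of each period; your Jensen-type argument is shorter and more quantitative for the equilibrium case, producing an explicit (initial-condition-dependent) decay rate rather than finite-time capture followed by contraction, at the price of relying on the standard but unproved-in-the-paper inequality $\mu\left(\int M\right)\le\int\mu(M)$. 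Both proofs share the same two ingredients, nonexpansion on $K$ and a neighborhood of strict contraction around $x^*$ obtained by continuity of $\mu(J(\cdot))$; the difference lies entirely in how the local contraction is propagated out to all of $K$. Your handling of the remaining Lyapunov claims (stability from nonexpansion, uniqueness of the equilibrium from global attraction, and the two candidate functions) is also sound.
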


Theorem \ref{thm:ne} is closely related to existing results in the literature, although we believe the generality provided by Theorem \ref{thm:ne} and the further generalization to periodically time-varying systems in the appendix, for which we prove convergence to a unique periodic trajectory, is novel. In particular, \cite[Lemma 6]{Lovisari:2014yq} provides a similar result for $\mu(\cdot)$ restricted to the matrix measure induced by the $\ell_1$ norm under the assumption that \eqref{eq:1} is monotone. A similar technique is applied to periodic trajectories of a class of monotone flow networks in \cite[Proposition 2]{Lovisari:2014qv}, but a general formulation is not presented. 

We conclude with a final technical result that will be useful for constructing Lyapunov functions as the limit of a sequence of contraction metrics.

\begin{prop}
\label{prop:limit}
  Let $K\subseteq \Domain$ be forward-invariant and let $x^*\in K$ be an asymptotically stable equilibrium of \eqref{eq:1} for which the domain of attraction includes $K$.  Suppose there exists a sequence of Lyapunov functions $V^i:K\to\mathbb{R}_{\geq 0}$ for \eqref{eq:1} on $K$ that converges locally uniformly to $V(x):=\lim_{i\to\infty}V^i(x)$. If $V(x)$ is radially unbounded and $V(x)=0$ if and only if $x=x^*$, then $V(x)$ is also a Lyapunov function for \eqref{eq:1}.

\end{prop}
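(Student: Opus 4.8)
The plan is to verify directly that $V$ satisfies each defining property of a Lyapunov function on $K$, drawing the corresponding property from the approximating sequence $V^i$ together with the asymptotic stability hypothesis. Nonnegativity, strict positivity away from the equilibrium ($V(x)=0$ for $x\in K$ if and only if $x=x^*$), and radial unboundedness are either immediate ($V\geq 0$ as a pointwise limit of nonnegative functions) or assumed outright; the substantive work is to establish continuity of $V$ and the monotonic decay of $V$ along trajectories.

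First I would establish continuity. Each $V^i$ is continuous by virtue of being a Lyapunov function, and $V^i\to V$ locally uniformly by hypothesis. Since a locally uniform limit of continuous functions is continuous, $V$ is continuous on $K$. This is the only genuinely analytic step, and it is exactly why local uniform convergence (rather than mere pointwise convergence) is assumed.

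Next I would show that $V(\phi(t,x_0))$ is nonincreasing in $t$. Fix $x_0\in K$ and $0\leq t_1\leq t_2$; forward-invariance of $K$ guarantees $\phi(t_1,x_0),\phi(t_2,x_0)\in K$. Because each $V^i$ is nonincreasing along trajectories, $V^i(\phi(t_2,x_0))\leq V^i(\phi(t_1,x_0))$, and passing to the limit $i\to\infty$ (pointwise convergence at these two points, which locally uniform convergence certainly provides, suffices) preserves the inequality, yielding $V(\phi(t_2,x_0))\leq V(\phi(t_1,x_0))$.

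Finally I would verify that $\lim_{t\to\infty}V(\phi(t,x_0))=0$. Since $x^*$ is asymptotically stable with domain of attraction containing $K$, we have $\phi(t,x_0)\to x^*$ as $t\to\infty$ for every $x_0\in K$; combining this with the continuity of $V$ established above and $V(x^*)=0$ gives $\lim_{t\to\infty}V(\phi(t,x_0))=V(x^*)=0$. Collecting these facts shows that $V$ meets every requirement in the definition of a Lyapunov function. I do not anticipate a serious obstacle here: the only point requiring care is that radial unboundedness and strict positivity away from $x^*$ need not be inherited by a locally uniform limit, which is precisely why they appear as explicit hypotheses rather than as conclusions to be derived.
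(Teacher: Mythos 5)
Your proof is correct, and its skeleton (continuity from the locally uniform limit; monotonicity by passing to the limit pointwise in the inequality $V^i(\phi(t_2,x_0))\leq V^i(\phi(t_1,x_0))$; the remaining properties taken as hypotheses) matches the paper's. Where you genuinely diverge is the last step, $\lim_{t\to\infty}V(\phi(t,x_0))=0$. The paper derives this from the decay of the approximants: asymptotic stability supplies a bounded set $\Omega\subseteq K$ containing the trajectory, local uniform convergence is then uniform along the trajectory, and this justifies interchanging $\lim_{i\to\infty}$ with $\lim_{t\to\infty}$, transferring $\lim_{t\to\infty}V^i(\phi(t,x_0))=0$ to $V$. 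You instead use the attractivity hypothesis directly: $\phi(t,x_0)\to x^*$ since the domain of attraction includes $K$, $V$ is continuous, and $V(x^*)=0$ by hypothesis, so $V(\phi(t,x_0))\to 0$ with no double-limit argument at all. Your route is more elementary --- it never uses the fact that each $V^i$ decays to zero along trajectories, only that each is continuous and nonincreasing --- whereas the paper's route shows the decay of $V$ is inherited from the $V^i$ themselves, using the stability hypothesis only to keep the trajectory in a bounded set where convergence is uniform. Since attractivity of $x^*$ is assumed rather than proved in this proposition, your shortcut is perfectly legitimate and arguably cleaner.
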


\begin{proof}
Consider some $x_0\in K$.  Asymptotic stability of $x^*$ implies there exists a bounded set $\Omega$ for which $\phi(t,x_0)\in\Omega\subseteq K$ for all $t\geq 0$. For $i=1,\ldots,n$, we have $V^i(\phi(t,x_0))$ is nonincreasing in $t$ and $\lim_{t\to\infty}V^i(\phi(t,x_0))=0$. Local uniform convergence establishes $V(x)$ is continuous, $V(\phi(t,x_0))$ is nonincreasing in $t$, and $\lim_{t\to\infty}V(\phi(t,x_0))=0$. Under the additional hypotheses of the proposition, we have that $V(x)$ is therefore a Lyapunov function.%
\end{proof}

Note that a sequence $V^i(x)$ arising from a sequence of weighted contraction metrics, \emph{i.e.}, $V^i(x)=|P_i(x-x^*)|$ or $V^i(x)=|P_if(x)|$ for $P_i$ converging to some nonsingular $P$, satisfies the conditions of Proposition \ref{prop:limit}.

\section{Contractive Monotone Systems}
\label{sec:contr-monot-syst}
In the remainder of this paper, we assume \eqref{eq:1} is monotone.  For $A$ Metzler, since $A_{ij}\geq 0$ for all $i\neq j$, %
\begin{align}
  \label{eq:10}
  \mu_1(A)&\textstyle =\max_{j=1,\ldots,n}\sum_{i=1}^nA_{ij},\\
  \mu_\infty(A)&\textstyle =\max_{i=1,\ldots,n}\sum_{j=1}^nA_{ij},%
\end{align}
that is, $\mu_1(A)$ is the largest column sum of $A$ and $\mu_{\infty}(A)$ is the largest row sum of $A$. The following proposition is easily verified from the identity \eqref{eq:11}.

\begin{prop}
\label{prop:metz}
For $A$ Metzler and $v\in\mathbb{R}^n$ with $v>0$, 
\begin{align}
  \label{eq:12}
 \mu_{1,v}(A)<c \quad \text{if and only if} \quad v^TA<cv^T.  
\end{align}
Likewise, for $w\in\mathbb{R}^n$ with $w>0$, 
\begin{align}
  \label{eq:13}
\mu_{\infty,w^\dagger}(A)<c\quad \text{if and only if} \quad A{w}<c{w}
\end{align}
where $w^\dagger:=(1/w_1,1/w_2,\ldots,1/w_n)$.
\end{prop}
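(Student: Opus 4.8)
The plan is to turn both equivalences into a direct evaluation of the scaled matrix measure, relying on the already-established identity \eqref{eq:11} together with the Metzler-specific column/row-sum formulas \eqref{eq:10}. For the first claim I would begin by specializing \eqref{eq:11} to $P=\text{diag}(v)$, which gives $\mu_{1,v}(A)=\mu_1\big(\text{diag}(v)\,A\,\text{diag}(v)^{-1}\big)$. The $(i,j)$ entry of this conjugated matrix is simply $v_iA_{ij}/v_j$, so the computation reduces to reading off a column sum of a rescaled matrix.

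The one step that genuinely requires care — and the only candidate for a ``main obstacle,'' though it is mild — is verifying that the diagonal conjugation preserves the Metzler property, since otherwise the simplified formula \eqref{eq:10} would not apply. This is immediate: for $i\neq j$ the entry $v_iA_{ij}/v_j$ is the product of the strictly positive scalars $v_i$ and $1/v_j$ with the off-diagonal entry $A_{ij}\geq 0$ of the Metzler matrix $A$, hence is nonnegative. Thus $\text{diag}(v)\,A\,\text{diag}(v)^{-1}$ is Metzler and \eqref{eq:10} is legitimate.

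With that in hand the remainder is a short calculation. Applying the column-sum formula yields
\begin{align}
  \mu_{1,v}(A)=\max_{j}\sum_{i=1}^n \frac{v_iA_{ij}}{v_j}=\max_{j}\frac{(v^TA)_j}{v_j}.
\end{align}
Because each $v_j>0$, the inequality $\mu_{1,v}(A)<c$ holds if and only if $(v^TA)_j<c\,v_j$ for every $j$, which is precisely the elementwise statement $v^TA<cv^T$. This establishes \eqref{eq:12}.

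Finally, I would handle \eqref{eq:13} by the symmetric argument with $P=\text{diag}(w^\dagger)=\text{diag}(1/w_1,\ldots,1/w_n)$, so that $P^{-1}=\text{diag}(w)$ and the $(i,j)$ entry of $P A P^{-1}$ is $A_{ij}w_j/w_i$. The same positivity reasoning shows this matrix is Metzler, and now the row-sum formula in \eqref{eq:10} gives $\mu_{\infty,w^\dagger}(A)=\max_i (Aw)_i/w_i$. Since $w_i>0$, this is less than $c$ exactly when $(Aw)_i<c\,w_i$ for all $i$, i.e.\ $Aw<cw$, completing the proof.
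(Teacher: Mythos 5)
Your proof is correct and follows exactly the route the paper intends: the paper dispenses with Proposition \ref{prop:metz} by remarking that it is ``easily verified from the identity \eqref{eq:11},'' and your argument is precisely that verification --- conjugating by $\text{diag}(v)$ (respectively $\text{diag}(w^\dagger)$), checking that the Metzler property is preserved so that the column/row-sum formulas \eqref{eq:10} apply, and reading off the elementwise inequalities. The care you take in justifying that $\text{diag}(v)\,A\,\text{diag}(v)^{-1}$ is Metzler is exactly the step the paper leaves implicit.
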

Propositions \ref{prop:mono} and \ref{prop:metz} lead to the following theorems.
\begin{thm}
\label{thm:main1}
Let \eqref{eq:1} be a monotone system. If there exists $v>0$, $c<0$, and convex, forward invariant $K\subseteq \Domain$  such that
  \begin{align}
    \label{eq:16}
    v^TJ(x)\leq c\mathbf{1}^T \quad \forall x\in K,
  \end{align}
then there exists an asymptotically stable equilibrium $x^*\in K$ and the domain of attraction includes $K$.  %
Furthermore, either of the following are Lyapunov functions on $K$:
\begin{align}
  \label{eq:17}
  V(x)&= \sum_{i=1}^nv_i|x_i-x_i^*|,\\
  \label{eq:17-2}
V(x)&= \sum_{i=1}^nv_i|f_i(x)|.
\end{align}
\end{thm}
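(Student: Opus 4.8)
The plan is to reduce Theorem \ref{thm:main1} to the contraction machinery already established, by recognizing that the hypothesis \eqref{eq:16} is exactly a scaled matrix-measure condition in disguise. First I would observe that since the system is monotone, Proposition \ref{prop:mono} guarantees $J(x)$ is Metzler for all $x\in K$, so the simplified formula $\mu_{1}(A)=\max_j\sum_i A_{ij}$ from \eqref{eq:10} applies. The key algebraic step is to interpret the vector inequality \eqref{eq:16}: writing it out componentwise, $\sum_{i=1}^n v_i J_{ij}(x)\le c$ for each column $j$, which is precisely the statement that every scaled column sum of $J(x)$ is bounded by $c<0$. By Proposition \ref{prop:metz}, specifically the equivalence \eqref{eq:12} with $v>0$, this says $\mu_{1,v}(J(x))<c$ (or $\le c$, matching whichever inequality we carry through), where $\mu_{1,v}$ is the matrix measure induced by the scaled norm $|x|_{1,v}=|\mathrm{diag}(v)\,x|_1=\sum_i v_i|x_i|$.

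Once this identification is made, the proof is essentially an application of the contraction framework. I would invoke Proposition \ref{thm:contract} with the norm $|\wc|_{1,v}$ and the measure $\mu_{1,v}$: the hypothesis \eqref{eq:16} gives $\mu_{1,v}(J(x))\le c<0$ on $K$, so \eqref{eq:1} is infinitesimally contracting on $K$ with respect to $|\wc|_{1,v}$. The discussion following the definition of infinitesimal contraction then yields existence of a unique asymptotically stable equilibrium $x^*\in K$ whose domain of attraction includes $K$, and identifies both $V(x)=|x-x^*|_{1,v}$ and $V(x)=|f(x)|_{1,v}$ as Lyapunov functions. Finally I would simply unfold the scaled norm: $|x-x^*|_{1,v}=\sum_{i=1}^n v_i|x_i-x_i^*|$ recovers \eqref{eq:17}, and $|f(x)|_{1,v}=\sum_{i=1}^n v_i|f_i(x)|$ recovers \eqref{eq:17-2}, completing the argument.

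The main obstacle is essentially bookkeeping rather than conceptual: making sure the direction and strictness of the inequalities line up. Proposition \ref{prop:metz} is phrased with strict inequalities ($\mu_{1,v}(A)<c$ iff $v^TA<cv^T$), whereas Theorem \ref{thm:main1} assumes $v^T J(x)\le c\mathbf{1}^T$. I would handle this by noting that $v^T J(x)\le c\mathbf{1}^T$ with $c<0$ still delivers $\mu_{1,v}(J(x))\le c<0$ through the Metzler formula \eqref{eq:10} applied after the similarity scaling in \eqref{eq:11}, which is all that Proposition \ref{thm:contract} and the ensuing contraction conclusion require (they only need a strictly negative bound on the measure, not the strict column-sum inequality of Proposition \ref{prop:metz}). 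The one subtlety worth flagging is that Proposition \ref{thm:contract} presumes $K$ is convex and forward-invariant, which is supplied as a hypothesis here, so no additional work is needed to invoke it. With these alignment checks dispatched, the theorem follows directly.
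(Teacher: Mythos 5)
Your proposal follows the paper's proof exactly in structure: use monotonicity (Proposition \ref{prop:mono}) to get the Metzler property, convert \eqref{eq:16} into a bound on the scaled measure $\mu_{1,v}$, invoke Proposition \ref{thm:contract} with the norm $|x|_{1,v}=|\text{diag}(v)x|_1$, and read off the two Lyapunov functions \eqref{eq:17}--\eqref{eq:17-2} from the discussion following the definition of infinitesimal contraction. However, one intermediate claim is false as written: \eqref{eq:16} does \emph{not} deliver $\mu_{1,v}(J(x))\le c$. For Metzler $A$, the scaling identity \eqref{eq:11} together with \eqref{eq:10} gives $\mu_{1,v}(A)=\max_j (v^TA)_j/v_j$, so the hypothesis $(v^TJ(x))_j\le c<0$ for all $j$ only yields $\mu_{1,v}(J(x))\le \max_j c/v_j = c/|v|_\infty$, which is strictly larger than $c$ whenever $|v|_\infty>1$. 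A scalar example makes this concrete: $f(x)=-x/2$, $v=2$, $c=-1$ satisfies $v^TJ=-1\le c$, yet $\mu_{1,v}(J)=-1/2>c$. Likewise your appeal to \eqref{eq:12} in the first paragraph is not legitimate, since \eqref{eq:12} compares $v^TA$ against $cv^T$, not against $c\mathbf{1}^T$.

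This slip does not sink the argument, because---as you correctly observe---Proposition \ref{thm:contract} only needs \emph{some} uniform strictly negative bound on the measure over $K$, and $c/|v|_\infty<0$ serves. But the constant must be renormalized rather than carried through as $c$. The paper does precisely this: it chooses $\tilde{c}\in[c/|v|_\infty,0)$, for which $c\mathbf{1}^T\le \tilde{c}v^T$ componentwise (since $v_j/|v|_\infty\le 1$ and $c<0$), hence $v^TJ(x)\le \tilde{c}v^T$ and then the (non-strict analogue of) \eqref{eq:12} gives $\mu_{1,v}(J(x))\le\tilde{c}<0$. With that one-line repair, your proof is the paper's proof.
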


\begin{proof}
  Suppose \eqref{eq:16} holds. There exists some $\tilde{c}<0$ such that  $v^TJ(x)\leq \tilde{c}v$ for all $x\in K$, in particular, we take $\tilde{c}\in[c/|v|_\infty,0)$. From \eqref{eq:12}, it follows that $\mu_{1,v}(J(x))\leq \tilde{c}$. The theorem follows from Proposition \ref{thm:contract}.
\end{proof}

\begin{thm}
\label{thm:main2}
  Let \eqref{eq:1} be a monotone system. If there exists $w>0$, $c<0$, and convex, forward invariant $K\subseteq\Domain$ such that
  \begin{align}
    \label{eq:18}
    J(x)w\leq c\mathbf{1}\quad \forall x\in K,
  \end{align}
then there exists an asymptotically stable equilibrium $x^*\in K$ and the domain of attraction includes $K$.  %
Furthermore, either of the following are Lyapunov functions on $K$:
\begin{align}
  \label{eq:19}
  V(x)&=\max_{i=1,\ldots,n}\left\{\frac{1}{w_i}|x_i-x_i^*|\right\},\\
  \label{eq:20}
  V(x)&=\max_{i=1,\ldots,n}\left\{\frac{1}{w_i}|f_i(x)|\right\}.
\end{align}
\end{thm}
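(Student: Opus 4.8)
The plan is to mirror the proof of Theorem~\ref{thm:main1}, replacing the column-sum condition and $\ell_1$-based quantities with their row-sum, $\ell_\infty$-based counterparts: the role of $v$ is now played by $w$, sum-separability becomes max-separability, and the relevant scaled matrix measure is $\mu_{\infty,w^\dagger}(\wc)$ rather than $\mu_{1,v}(\wc)$. Indeed, the hypothesis $J(x)w\leq c\mathbf{1}$ is the transpose analog of $v^TJ(x)\leq c\mathbf{1}^T$, so the entire argument should transcribe.

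First I would convert the hypothesis $J(x)w\leq c\mathbf{1}$ into a scaled inequality of the form $J(x)w\leq \tilde c\, w$ for a single $\tilde c<0$ valid for all $x\in K$. Since $w>0$ and $(J(x)w)_i\leq c<0$ for every $i$, it suffices to choose $\tilde c$ so that $\tilde c\, w_i\geq c$ for all $i$; taking $\tilde c\in[c/|w|_\infty,0)$ works, because $c<0$ makes $w_i\mapsto c/w_i$ increasing and hence $\max_i(c/w_i)=c/|w|_\infty$. This is the exact analog of the choice $\tilde c\in[c/|v|_\infty,0)$ used for Theorem~\ref{thm:main1}.

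Next I would invoke the row-sum characterization \eqref{eq:13} of Proposition~\ref{prop:metz}. Because $J(x)$ is Metzler by Proposition~\ref{prop:mono}, the inequality $J(x)w\leq\tilde c\,w$ gives $\mu_{\infty,w^\dagger}(J(x))\leq\tilde c$ for all $x\in K$ (the nonstrict version follows from the identity $\mu_{\infty,w^\dagger}(A)=\max_i (1/w_i)(Aw)_i$, which one reads off from \eqref{eq:11} and \eqref{eq:58}). I would then apply Proposition~\ref{thm:contract} with the norm $|\wc|_{\infty,w^\dagger}$ and constant $\tilde c<0$: this yields infinitesimal contraction, hence, by the discussion following the definition of infinitesimal contraction, a unique asymptotically stable equilibrium $x^*\in K$ whose domain of attraction contains $K$, together with the two candidate Lyapunov functions $V(x)=|x-x^*|_{\infty,w^\dagger}$ and $V(x)=|f(x)|_{\infty,w^\dagger}$. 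Finally, unpacking the definition $|z|_{\infty,w^\dagger}=|\mathrm{diag}(w^\dagger)z|_\infty=\max_i (1/w_i)|z_i|$ reproduces exactly \eqref{eq:19} and \eqref{eq:20}.

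Since every step is a direct transcription of the $\ell_1$ argument, I do not anticipate a genuine obstacle. The only point requiring care is the construction of the uniform $\tilde c<0$ in the first step, in particular verifying that a single negative scaling constant works simultaneously across all coordinates and all $x\in K$; but this is routine, since the hypothesis already supplies one constant $c$ bounding every component of $J(x)w$ uniformly over $K$.
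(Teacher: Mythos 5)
Your proposal is correct and follows essentially the same route as the paper: the paper proves Theorem~\ref{thm:main1} by rescaling the hypothesis to $v^TJ(x)\leq \tilde{c}v^T$ with $\tilde{c}\in[c/|v|_\infty,0)$, invoking Proposition~\ref{prop:metz} and then Proposition~\ref{thm:contract}, and leaves Theorem~\ref{thm:main2} as the evident $\ell_\infty$/row-sum transposition, which is exactly what you carried out. Your explicit handling of the nonstrict version of \eqref{eq:13} via $\mu_{\infty,w^\dagger}(A)=\max_i (1/w_i)(Aw)_i$ is in fact slightly more careful than the paper's own wording for the $\ell_1$ case.
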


Note that Theorem \ref{thm:main1} and Theorem \ref{thm:main2} lead to global asymptotic stability when $K=\Domain$. Moreover, Theorems \ref{thm:main1} and \ref{thm:main2} can be considered nonlinear extensions of known stability verification results for linear monotone (that is, positive) systems. In particular, conditions \eqref{eq:16} and \eqref{eq:18} recover the stability conditions (1.2) and (1.3) of \cite[Proposition 1]{Rantzer:2012fj} when $J(x)$ is replaced with the static matrix $A$ as detailed in Example \ref{ex:1} below.
\begin{definition}
  The system \eqref{eq:1} is \emph{contractive monotone} if it is monotone and infinitesimally contracting.
\end{definition}
The hypotheses of Theorem \ref{thm:main1} and Theorem \ref{thm:main2} imply that the system is contractive monotone.  We now specialize Theorem \ref{thm:ne} to monotone systems.%

\begin{corollary}
\label{thm:ne1}
    Let \eqref{eq:1} be a monotone system with equilibrium $x^*$. If there exists $v>0$ and convex, forward invariant $K\subseteq \Domain $ such that
    \begin{align}
      \label{eq:9}
          v^TJ(x)&\leq 0\ \forall x\in K\quad \text{and}\quad      v^TJ(x^*)<0,
    \end{align}
then $x^*$ is asymptotically stable, the domain of attraction includes $K$, and \eqref{eq:17}--\eqref{eq:17-2} are Lyapunov functions.
\end{corollary}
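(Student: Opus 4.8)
The plan is to recognize Corollary \ref{thm:ne1} as nothing more than the specialization of Theorem \ref{thm:ne} to the weighted $\ell_1$ norm $|\wc|_{1,v}$, whose induced matrix measure is $\mu_{1,v}(\cdot)$. Under this reading, the entire task reduces to translating the scaled column-sum conditions \eqref{eq:9} into the matrix-measure conditions $\mu_{1,v}(J(x)) \leq 0$ for all $x \in K$ and $\mu_{1,v}(J(x^*)) < 0$ that Theorem \ref{thm:ne} requires, after which the conclusion is immediate.

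First I would invoke monotonicity: by Proposition \ref{prop:mono}, the assumption that \eqref{eq:1} is monotone means $J(x)$ is Metzler for every $x \in K$. This is precisely what allows me to use the Metzler-specific identity \eqref{eq:10}, namely that $\mu_1(A)$ equals the largest column sum of $A$. Combining this with the scaling identity \eqref{eq:11} and observing that $\mathrm{diag}(v) A \,\mathrm{diag}(v)^{-1}$ is again Metzler when $v > 0$ (its off-diagonal entries are $v_i A_{ij}/v_j \geq 0$), I obtain the explicit expression $\mu_{1,v}(A) = \max_{j} \tfrac{1}{v_j}(v^T A)_j$ for Metzler $A$. This is essentially the content of Proposition \ref{prop:metz}, and the same reasoning delivers both its strict form and the non-strict form I will need.

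With this formula in hand the hypotheses \eqref{eq:9} convert directly. Since each $v_j > 0$, the elementwise inequality $v^T J(x) \leq 0$ is equivalent to $\tfrac{1}{v_j}(v^T J(x))_j \leq 0$ for every $j$, hence to $\mu_{1,v}(J(x)) \leq 0$; likewise $v^T J(x^*) < 0$ is equivalent to $\mu_{1,v}(J(x^*)) < 0$. Thus both conditions of \eqref{eq:21} hold for the norm $|\wc|_{1,v}$ on the convex, forward-invariant set $K$ containing the equilibrium $x^*$, and I may apply Theorem \ref{thm:ne} to conclude that $x^*$ is asymptotically stable with domain of attraction containing $K$ and that $V(x) = |x - x^*|_{1,v}$ and $V(x) = |f(x)|_{1,v}$ are Lyapunov functions. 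Expanding the weighted norm via $|z|_{1,v} = |\mathrm{diag}(v) z|_1 = \sum_i v_i |z_i|$ recovers exactly \eqref{eq:17} and \eqref{eq:17-2}, finishing the argument.

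The only point requiring care, and the one I would flag as the main (minor) obstacle, is the strict-versus-non-strict bookkeeping: Proposition \ref{prop:metz} is stated with a strict inequality, so I cannot cite it verbatim for the $\leq 0$ half of \eqref{eq:9}. I would therefore establish the $\mu_{1,v}(J(x)) \leq 0$ implication directly from the maximum-of-scaled-column-sums formula above rather than from Proposition \ref{prop:metz}, reserving the strict statement for the condition at $x^*$. No other difficulty arises, since the asymptotic stability, domain-of-attraction, and Lyapunov-function conclusions are all inherited wholesale from Theorem \ref{thm:ne}.
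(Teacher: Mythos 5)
Your proposal is correct and follows essentially the same route as the paper, which presents this corollary as a direct specialization of Theorem~\ref{thm:ne} to the weighted norm $|\wc|_{1,v}$, with the translation of \eqref{eq:9} into the matrix-measure conditions \eqref{eq:21} supplied by Proposition~\ref{prop:metz} (via the Metzler column-sum formula \eqref{eq:10} and the scaling identity \eqref{eq:11}). Your explicit handling of the non-strict inequality, which Proposition~\ref{prop:metz} as stated does not literally cover, is a sound refinement of a detail the paper leaves implicit.
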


\begin{corollary}
\label{thm:ne2}
    Let \eqref{eq:1} be a monotone system with equilibrium $x^*$. If there exists $w>0$ and convex, forward invariant $K\subseteq \Domain $ such that
  \begin{align}
    \label{eq:6}
          J(x)w&\leq 0\  \forall x\in K\quad \text{and}\quad J(x^*)w<0,
  \end{align}
then $x^*$ is asymptotically stable,  the domain of attraction includes $K$, and \eqref{eq:19}--\eqref{eq:20} are Lyapunov functions.
\end{corollary}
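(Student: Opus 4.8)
The plan is to apply Theorem \ref{thm:ne} directly, taking the vector norm to be the scaled infinity norm $|\wc|_{\infty,w^\dagger}$ with $w^\dagger=(1/w_1,\ldots,1/w_n)$. The key observation is that this choice of norm reproduces exactly the candidate Lyapunov functions \eqref{eq:19} and \eqref{eq:20}, and that the hypotheses on $J(x)w$ are precisely the statement that the associated matrix measure is nonpositive on $K$ and strictly negative at $x^*$.

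First I would compute the scaled matrix measure explicitly. By monotonicity and Proposition \ref{prop:mono}, $J(x)$ is Metzler for every $x\in K$. Using the scaling identity \eqref{eq:11} with $P=\text{diag}(w^\dagger)$ together with the formula \eqref{eq:58} for the infinity-norm matrix measure, and noting that $PJ(x)P^{-1}$ remains Metzler since $w>0$, I would obtain
\begin{align*}
  \mu_{\infty,w^\dagger}(J(x))=\max_{i=1,\ldots,n}\frac{(J(x)w)_i}{w_i}.
\end{align*}
This makes the translation transparent: $J(x)w\leq 0$ for all $x\in K$ gives $\mu_{\infty,w^\dagger}(J(x))\leq 0$ on $K$, and $J(x^*)w<0$ gives $\mu_{\infty,w^\dagger}(J(x^*))<0$, since each $w_i>0$.

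With both hypotheses of Theorem \ref{thm:ne} verified for the norm $|\wc|_{\infty,w^\dagger}$, I would invoke that theorem to conclude that $x^*$ is asymptotically stable with domain of attraction containing $K$, and that $V(x)=|x-x^*|_{\infty,w^\dagger}$ and $V(x)=|f(x)|_{\infty,w^\dagger}$ are Lyapunov functions. The final step is to unwind the definition of the scaled norm: for any $z$, $|z|_{\infty,w^\dagger}=|\text{diag}(w^\dagger)z|_\infty=\max_i (1/w_i)|z_i|$, which identifies these two functions with \eqref{eq:19} and \eqref{eq:20} respectively.

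There is no substantial obstacle here; the result is the max-separable, $\ell_\infty$ counterpart of Corollary \ref{thm:ne1}, and the argument parallels the proof of Theorem \ref{thm:main2} but relaxes the strict uniform contraction bound to a nonexpansion bound supplemented by strict negativity at the equilibrium. The only point requiring mild care is the passage from the non-strict componentwise inequality $J(x)w\leq 0$ to the non-strict matrix-measure bound $\mu_{\infty,w^\dagger}(J(x))\leq 0$: since Proposition \ref{prop:metz} is phrased with strict inequalities, I would either use the explicit matrix-measure formula above or observe that $J(x)w\leq 0<\epsilon w$ for every $\epsilon>0$, whence $\mu_{\infty,w^\dagger}(J(x))<\epsilon$ for all $\epsilon>0$ and the bound follows in the limit.
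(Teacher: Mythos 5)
Your proof is correct and takes essentially the same route as the paper, which presents this corollary as a direct specialization of Theorem \ref{thm:ne} to the weighted norm $|\wc|_{\infty,w^\dagger}$, with Proposition \ref{prop:metz} (equivalently, the explicit Metzler matrix-measure formula $\mu_{\infty,w^\dagger}(J(x))=\max_i (J(x)w)_i/w_i$ you derive) translating the hypotheses. Your extra care over the non-strict inequality, which Proposition \ref{prop:metz} states only in strict form, correctly fills in the one detail the paper leaves implicit.
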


 \section{Examples}
\label{sec:examples}

We now present several examples. First, we recover a well-known condition for stability of monotone linear systems, also called positive linear systems.
 \begin{example}[Linear systems]
\label{ex:1}
Consider $\dot{x}=Ax$ for $A$ Metzler. Theorems \ref{thm:main1} and \ref{thm:main2} imply that if one of the following conditions holds, then the system is globally asymptotically stable:
\begin{align}
  \label{eq:53}
  &\text{There exists $v>0$ such that $v^TA<0$, \quad or}\\
  \label{eq:53-2}&\text{There exists $w>0$ such that $Aw<0$}.
\end{align}
If \eqref{eq:53} holds then $V(x)=\sum_{i=1}^nv_i|x_i|$ and $V(x)=\sum_{i=1}^n v_i|(Ax)_i|$ are Lyapunov functions, and if \eqref{eq:53-2} holds then $V(x)=\max_i\{|x_i|/w_i\}$ and $V(x)=\max_i\{|(Ax)_i|/w_i\}$ are Lyapunov functions where $(Ax)_i$ denotes the $i$th element of $Ax$.
\end{example}
In fact, it is well known that $A$ is Hurwitz if and only if either (and therefore, both) of the two conditions \eqref{eq:53} and \eqref{eq:53-2} hold, as established in, \emph{e.g.}, \cite [Proposition 1]{Rantzer:2012fj}, and the corresponding state separable Lyapunov functions of Example \ref{eq:1} are also derived in \cite{Rantzer:2012fj}.

The following example is inspired by \cite [Example 3]{Dirr:2015rt}.
\begin{example}[Comparison system]
  Consider the system 
  \begin{align}
    \label{eq:23}
    \dot{x}_1&=-x_1+x_1x_2\\
    \label{eq:23-2}\dot{x}_2&=-2x_2-x_2^2+\gamma(x_1)^2
  \end{align}
evolving on $\Domain=\mathbb{R}_{\geq 0}^2$ where $\gamma:\mathbb{R}_{\geq 0}\to\mathbb{R}_{\geq 0}$ is strictly increasing  and satisfies $\gamma(0)=0$, $\bar{\gamma}:=\lim_{\sigma\to\infty}\gamma(\sigma)<1$, and $  \gamma'(\sigma)\leq \frac{1}{(1+\sigma)^{2}}$.
Consider the change of coordinates $(\eta_1,\eta_2)=(\log(1+x_1),x_2)$ so that 
\begin{align}
  \label{eq:27}
  \dot{\eta}_1&=\frac{1}{1+x_1}(-x_1+x_1x_2)
\end{align}
where we substitute $(x_1,x_2)=(e^{\eta_1}-1,\eta_2)$. Then
\begin{align}
  \label{eq:28}
  \dot{\eta}_1\leq -\beta(e^{\eta_1}-1)+\eta_2
\end{align}
where $\beta(\sigma)={\sigma}/{(1+\sigma)}$. Introduce the comparison system 
\begin{align}
  \label{eq:29}
  \dot{\xi}_1&=-\beta(e^{\xi_1}-1)+\xi_2\\
  \label{eq:29-2}  \dot{\xi}_2&=-2\xi_2-\xi_2^2+\gamma(e^{\xi_1}-1)^2
\end{align}
evolving on $\mathbb{R}_{\geq 0}^2$. The comparison principle (see, \emph{e.g.}, \cite{Dirr:2015rt}) ensures that asymptotic stability of the origin for the comparison system \eqref{eq:29}--\eqref{eq:29-2} implies asymptotic stability of the origin of the $(\eta_1,\eta_2)$ system, which in turn establishes asymptotic stability of the origin for \eqref{eq:23}--\eqref{eq:23-2}. The Jacobian of \eqref{eq:29}--\eqref{eq:29-2} is given by
\begin{align}
  \label{eq:30}
  J(\xi)=
  \begin{pmatrix}
    -e^{\xi_1}\beta'(e^{\xi_1}-1)&1\\
2e^{\xi_1}\gamma(e^{\xi_1}-1)\gamma'(e^{\xi_1}-1)&-2-2\xi_2
  \end{pmatrix}
\end{align}
where $\beta'(\sigma)=\frac{1}{(1+\sigma)^2}$. Let $v=(2\bar{\gamma}+\epsilon,1)$ where $\epsilon$ is chosen small enough so that $c_1:=(2\bar{\gamma}+\epsilon-2)<0$. It follows that
\begin{align}
  \label{eq:14}
  v^TJ(\xi)\leq (-\epsilon e^{-\xi_1},c_1)\leq 0\quad \forall \xi
\end{align}
and $v^TJ(0)<0$. Applying Corollary \ref{thm:ne1}, the origin of \eqref{eq:23}--\eqref{eq:23-2} and \eqref{eq:29}--\eqref{eq:29-2} is globally asymptotically stable. Furthermore, we have the following state and flow sum-separable Lyapunov functions for the comparison system \eqref{eq:29}--\eqref{eq:29-2}:
\begin{align}
  \label{eq:50}
 V(\xi)&=(2\bar{\gamma}+\epsilon)\xi_1+\xi_2  \\
V(\xi)&=(2\bar{\gamma}+\epsilon)|\dot{\xi}_1|+|\dot{\xi_2}|.
\end{align}
Above, we understand $\dot{\xi}_1$ and $\dot{\xi}_2$ to be shorthand for the equalities expressed in \eqref{eq:29}--\eqref{eq:29-2}.
\end{example}

\begin{example}[Multiagent system]
\label{ex:multiagent}
Consider the following system evolving on $\Domain=\mathbb{R}^3$:
\begin{align}
  \label{eq:31}
  \dot{x}_1&=-\alpha_1(x_1)+\rho_1(x_3-x_1)\\
\dot{x}_2&=\rho_2(x_1-x_2)+\rho_3(x_3-x_2)\\
  \label{eq:31-3}\dot{x}_3&=\rho_4(x_2-x_3)
\end{align}
where we assume $\alpha_1:\mathbb{R}\to\mathbb{R}$ is strictly increasing and satisfies $\alpha(0)=0$  and $\alpha_1'(\sigma)\geq \ul{c}_0$ for some $\ul{c}_0>0$ for all $\sigma$, and each $\rho_i:\mathbb{R}\to\mathbb{R}$ is strictly increasing and satisfies $\rho_i(0)=0$. Furthermore, for $i=1,3$, $\rho'_i(\sigma)\leq \ol{c}_i$ for some $\ol{c}_i>0$ for all $\sigma$, and for $i=2,4$, $\rho'_i(\sigma)\geq \ul{c}_i$ for some $\ol{c}_i>0$ for all $\sigma$. 

For example, $x_1$, $x_2$, and $x_3$ may be the position of three vehicles, for which the dynamics  \eqref{eq:31}--\eqref{eq:31-3}  are a rendezvous protocol whereby agent 1 moves towards agent 3 at a rate dependent on the distance $x_3-x_1$ as determined by $\rho_1$, \emph{etc.} Additionally, agent 1 navigates towards the origin according to $-\alpha_1(x_1)$. Computing the Jacobian, we obtain
\begin{align}
  \label{eq:32}
\nonumber&  J(x)=\\
  &\begin{pmatrix}
    -\alpha'(x_1)-\rho_1'(z_{31})&0&\rho_1'(z_{31})\\
\rho_2'(z_{12})&-\rho_2'(z_{12})-\rho_3'(z_{32})&\rho_3'(z_{32})\\
0&\rho'_4(z_{23})&-\rho'_4(z_{23})
  \end{pmatrix}
\end{align}
where $z_{ij}:= x_i-x_j$. Let $w=(1,1+\epsilon_1,1+\epsilon_1+\epsilon_2)^T$
where $\epsilon_1>0$ and $\epsilon_2>0$ are chosen to satisfy
\begin{align}
  \label{eq:33}
\ul{c}_0&>  (\epsilon_1+\epsilon_2)\ol{c}_1\quad \text{and}\quad \epsilon_1\ul{c}_2>\epsilon_2\ol{c}_3.%
\end{align}
We then have $  J(x)w\leq c\mathbf{1}$ for all $x$ 
for $c=\max\{(\epsilon_1+\epsilon_2) \ol{c}_1-\ul{c}_0 ,\epsilon_2\ol{c}_3-\epsilon_1\ul{c}_2,-\epsilon_2\ul{c}_4\}<0$. Thus, the origin of \eqref{eq:31}--\eqref{eq:31-3} is globally asymptotically stable by Theorem \ref{thm:main2}. Furthermore,
\begin{align}
  \label{eq:37}
  V(x)&=\max\{|x_1|,(1+\epsilon_1)^{-1}|x_2|,(1+\epsilon_1+\epsilon_2) ^{-1}|x_3|\},\\
  \label{eq:37-2}   V(x)&=\max\{|\dot{x}_1|,(1+\epsilon_1) ^{-1}|\dot{x}_2|,(1+\epsilon_1+\epsilon_2) ^{-1}|\dot{x}_3|\}
\end{align}
are state and flow max-seperable Lyapunov functions where we interpret $\dot{x}_i$ as shorthand for the equalities expressed in \eqref{eq:31}--\eqref{eq:31-3}. %
Since we can take $\epsilon_1$ and $\epsilon_2$ arbitrarily small satisfying \eqref{eq:33},  using Proposition \ref{prop:limit} we have also the following choices for Lyapunov functions:
\begin{align}
  \label{eq:49}
  V(x)&=\max\{|x_1|,|x_2|,|x_3|\},\\
  \label{eq:49-2}   V(x)&=\max\{|\dot{x}_1|,|\dot{x}_2|,|\dot{x}_3|\} .
\end{align}

The flow max-separable Lyapunov functions \eqref{eq:37-2} and \eqref{eq:49-2} are particularly useful for multiagent vehicular networks where it often easier to measure each agent's velocity rather than absolute position.
\end{example}
In Example \ref{ex:multiagent}, choosing $w=\mathbf{1}$, we have $J(x)w\leq 0$, however this is not enough to establish asymptotic stability using Theorem \ref{thm:main2}. Informally, choosing $w$ as in the example
distributes the extra negativity of $-\alpha'(x_1)$ among the columns of $J(x)$. Nonetheless, Proposition \ref{prop:limit} implies that choosing $w=\mathbf{1}$ indeed leads to a valid Lyapunov function.%

The above example generalizes to systems with many agents interacting via arbitrary directed graphs, as does the principle of distributing extra negativity  along diagonal entries of the Jacobian as discussed in Section \ref{sec:disc-comp-exist}.

\begin{example}[Traffic flow]
\label{ex:traffic}
A model of traffic flow along a freeway with no onramps is obtained by spatially partitioning the freeway into $n$ segments such that traffic flows from segment $i$ to $i+1$, $x_i\in[0,\bar{x}_i]$ is the density of vehicles occupying link $i$, and $\bar{x}_i$ is the capacity of link $i$. A fraction $\beta_i\in(0,1]$ of the flow out of link $i$ enters link $i+1$. The remaining $1-\beta_i$ fraction is assumed to exit the network via, \emph{e.g.}, unmodeled offramps. Associated with each link is a continuously differentiable \emph{demand} function $D_i:[0,\bar{x}_i]\to\mathbb{R}_{\geq 0}$ that is strictly increasing and satisfies $D_i(0)=0$, and a continuously differentiable \emph{supply} function $S_i:[0,\bar{x}_i]\to\mathbb{R}_{\geq 0}$ that is strictly decreasing and satisfies $S_i(\bar{x}_i)=0$. Flow from segment to segment is restricted by upstream demand and downstream supply, and the change in density of a link is governed by mass conservation:
\begin{align}
  \label{eq:38}
\dot{x}_1&= \min\{\delta_1,S_1(x_1)\}-\frac{1}{\beta_1}g_{1}(x_{1},x_{2})\\
  \dot{x}_i&= g_{i-1}(x_{i-1},x_i)-\frac{1}{\beta_i}g_{i}(x_{i},x_{i+1}), \quad i=2,\ldots,n-1\\
  \label{eq:38-3}\dot{x}_n&=g_{n-1}(x_{n-1},x_n)- D_n(x_n)
\end{align}
for some $\delta_1>0$ where, for $i=1,\ldots,n-1$,
\begin{align}
  \label{eq:39}
g_{i}(x_{i},x_{i+1})=\min\{\beta_iD_i(x_i),S_{i+1}(x_{i+1})\}.
\end{align}
Let $\delta_i\triangleq \delta_1\prod_{j=1}^{i-1}\beta_j$ for $i=2,\ldots, n$. If $d^{-1}_i(\delta_i)<s^{-1}_i(\delta_i)$ for all $i$, then $\delta_1$ is said to be \emph{feasible} and $x^*_i:=d^{-1}_i(\delta_i)$ constitutes the unique equilibrium. 

\begin{figure*}
\begin{align}
  \label{eq:36}
  J(x)=
  \begin{pmatrix}
    \partial_1g_0-\frac{1}{\beta_1}\partial_1g_1&-\frac{1}{\beta_1}\partial_2g_1&0&0&\cdots&0\\
\partial_1 g_1&\partial_2g_1-\frac{1}{\beta_2}\partial_2g_2&-\frac{1}{\beta_2}\partial_3 g_2&0 &\cdots&0\\
0&\partial_2g_2&\partial_3g_2-\frac{1}{\beta_3}\partial_3g_3&-\frac{1}{\beta_3}\partial_4g_3&&0\\
\vdots&&&&\ddots&\vdots\\
0&0&\cdots&0&\partial_{n-1}g_{n-1}&\partial_{n}g_{n-1}-\partial_nD_n(x_n)
  \end{pmatrix}
\end{align}
\hrule
\end{figure*}

Let $\partial_i$ denote differentiation with respect to the $i$th component of $x$, that is, $\partial_ig(x):=\frac{\partial g}{\partial x_i}(x)$ for a function $g(x)$. The dynamics \eqref{eq:38}--\eqref{eq:38-3} define a system $\dot{x}=f(x)$ for which $f$ is continuous but only piecewise differentiable. Nonetheless, the results developed above apply for this case, and, in the sequel, we interpret statements involving derivatives to hold wherever the derivative exists.

Notice that $\partial_{i}g_i(x_i,x_{i+1})\geq 0$ and $\partial_{i+1}g_i(x_i,x_{i+1})\leq 0$. Define $g_0(x_1):=\min\{\delta_1,S_1(x_1)\} $. 
The Jacobian, where it exists, is given by \eqref{eq:36} on the following page. Let
\begin{align}
  \label{eq:40}
  \tilde{v}=\begin{pmatrix}1,\beta_1^{-1},(\beta_1\beta_2)^{-1},\ldots,(\beta_1\beta_2\cdots\beta_{n-1})^{-1}\end{pmatrix}^T.
\end{align}
Then $\tilde{v}^TJ(x)\leq 0$ for all $x$. Moreover, there exists $\epsilon=(\epsilon_1,\epsilon_2,\ldots,\epsilon_{n-1},0)$ with $\epsilon_{i}>\epsilon_{i+1}$ for each $i$ such that $v:=\tilde{v}+\epsilon$ satisfies
\begin{align}
  \label{eq:41}
  v^TJ(x)&\leq 0\quad \forall x\\
  \label{eq:41-2} v^TJ(x^*)&<0.
\end{align}
Such a vector $\epsilon$ is constructed using a technique similar to that used in Example \ref{ex:multiagent}. In particular, the sum of the $n$th column of $\text{diag}(\tilde{v})J(x)$ is strictly negative because $-\partial_nD_n(x_n)<0$, and this excess negativity is used to construct $v$ such that \eqref{eq:41}--\eqref{eq:41-2} holds. A particular choice of $\epsilon$ such that \eqref{eq:41}--\eqref{eq:41-2} holds depends on bounds on the derivative of the demand functions $D_i$, but it is possible to choose $\epsilon$ arbitrarily small. Corollary~\ref{thm:ne1} establishes asymptotic stability, and Proposition \ref{prop:limit} gives the following sum-separable Lyapunov functions:
\begin{align}
  \label{eq:42}
  V(x)&=\sum_{i=1}^n \left(|x_i|\prod_{j=1}^{i-1}\beta_j\right),\\
  \label{eq:42-2}  V(x)&=\sum_{i=1}^n \left(|\dot{x}_i|\prod_{j=1}^{i-1}\beta_j\right),
\end{align}
where we interpret $\dot{x}_i$ according to \eqref{eq:38}--\eqref{eq:38-3}.

In traffic networks, it is often easier to measure traffic flow rather than traffic density. Thus, \eqref{eq:42-2} is a practical Lyapunov function indicating that the (weighted) total absolute net flow throughout the network decreases over time.
\end{example}
 In \cite{coogan2015compartmental}, a result similar to that of Example \ref{ex:traffic} is derived for possibly infeasible input flow and  traffic flow network topologies where merging junctions with multiple incoming links are allowed. The proof considers a flow sum-separable Lyapunov function similar to \eqref{eq:42-2} and appeals to LaSalle's invariance principle rather than Proposition \ref{prop:limit}.

\section{Discussion}
\label{sec:disc-comp-exist}

We first highlight the connection of the above results to small-gain conditions for interconnections of nonlinear systems. Consider $N$ interconnected systems with dynamics $\dot{x}_i=f_i(x_1,\ldots,x_N)$ for $x_i\in\mathbb{R}^{n_i}$ and suppose they satisfy an input-to-state stability (ISS) condition \cite{Sontag:1989fk} whereby there exists ISS Lyapunov functions $V_i$ \cite{Sontag:1995qf} satisfying
\begin{align}
  \label{eq:57}
  \frac{\partial V_i}{\partial x_i}(x_i)f_i(x)\leq -\alpha_i(V_i(x_i))+\sum_{i\neq j}\gamma_{ij}(V_j(x_j))
\end{align}
where each $\alpha_i$ and $\gamma_{ij}$ is a $\Kinf$ function. We obtain a monotone comparison system 
\begin{align}
  \label{eq:51}
\dot{\nu}=g(\nu), \qquad g_i(\nu)=-\alpha_i(\nu_i)+\sum_{j\neq i}\gamma_{ij}(\nu_j)
\end{align}
evolving on $\mathbb{R}^n_{\geq 0}$ for which asymptotic stability of the origin implies asymptotic stability of the original system \cite{Ruffer:2010tw}. For $N=2$, it is noted in \cite{Jiang:1996dw} that if $\gamma_{12}(\sigma)=\kappa_1\alpha_2 (\sigma)$ and $\gamma_{21}(\sigma)=\kappa_2\alpha_1 (\sigma)$ for $\kappa_1>0$, $\kappa_2>0$ such that $\kappa_1\kappa_2<1$, then $v_1V_1(x_1)+v_2V(x_2)$ is a Lyapunov function for the original system for any $v=(v_1\ v_2)^T>0$ satisfying $v_1\kappa_1<v_2<v_1/\kappa_2$. Indeed, for such a choice, we see that $v^T\frac{\partial g}{\partial \nu}(\nu)\leq 0$, and if additionally $\alpha_i'(0)>0$ for $i=1,2$, then $v^T\frac{\partial g}{\partial \nu}(0)<0$ so that Corollary \ref{thm:ne1} provides a contraction theoretic interpretation of this result.

 Alternatively, in \cite{Ruffer:2010tw, Dashkovskiy:2010zh}, it is shown that if there exists a function $\rho:\mathbb{R}_{\geq 0}\to\mathbb{R}_{\geq 0}^n$ with each component $\rho_i$ belonging to class\footnote{A continuous function $\alpha:\mathbb{R}_{\geq 0}\to\mathbb{R}_{\geq 0}$ is of class $\Kinf$ if it is strictly increasing, $\alpha(0)=0$, and $\lim_{r\to\infty}\alpha(r)=\infty$.} $\mathcal{K}_\infty$ such that $g(\rho(r))< 0$ for all $r>0$, then the origin is asymptotically stable and $V(\nu):=\max_i\{\rho_i^{-1}(\nu_i)\}$ is a Lyapunov function. If condition \eqref{eq:6} of Corollary \ref{thm:ne2} holds for the comparison system for some $w$, we may choose $\rho(r)=rw$. Indeed, we have
\begin{align}
  \label{eq:55}
  g(rw)=\int_0^1\frac{\partial g}{\partial \nu}(\sigma rw) rw\ d \sigma <0 \quad \forall r>0.
\end{align}
 For this case, $V(\nu)=\max_i\{\rho_i^{-1}(\nu_i)\}=\max_i\{\nu_i/w_i\}$, recovering \eqref{eq:19}.

Next, we comment on the relationship between Theorem \ref{thm:ne} as well as Proposition \ref{prop:limit} and a generalization of contraction theory recently developed in \cite{Sontag:2014eu, Margaliot:2015wd} where exponential contraction between any two trajectories is required only after an arbitrarily small amount of time, an arbitrarily small overshoot, or both. In \cite[Corollary 1]{Margaliot:2015wd}, it is shown that if a system is contractive with respect to a sequence of norms convergent to some norm, then the system is generalized contracting with respect to that norm, a result analogous to Proposition \ref{prop:limit}. In \cite{Margaliot:2015wd}, conditions on the sign structure of the Jacobian are obtained that ensure the existence of such a sequence of weighted $\ell_1$ or $\ell_\infty$ norms. These conditions are a generalization of the technique in Example \ref{ex:multiagent}
and Example \ref{ex:traffic} 
above where small $\epsilon$ is used to distribute excess negativity.  

Furthermore, it is shown in \cite{Margaliot:2012hc, Margaliot:2014qv} that a ribosome flow model for gene translation is monotone and nonexpansive with respect to a weighted $\ell_1$ norm, and additionally is contracting on a subset of its domain. %
Entrainment of solutions is proved by first showing that all trajectories reach the region of exponential contraction. Theorem \ref{thm:appendix} in the appendix provides a different approach for studying entrainment by observing that the distance to the periodic trajectory strictly decreases in each period due to a neighborhood of contraction along the periodic trajectory. Theorem \ref{thm:ne} provides an analogous result for stability analysis of an equilibrium.%

Finally, we note that Metzler matrices with nonpositive column sums have also been called \emph{compartmental} \cite{Jacquez:1993uq}. It has been shown that if the Jacobian matrix is compartmental for all $x$, then $V(x)=|f(x)|$ is a decreasing function along trajectories of \eqref{eq:1} \cite{Jacquez:1993uq, Maeda:1978fk}; we recover this observation by considering the Lyapunov function implied by $\eqref{eq:8}$ with $c=0$ and $|\wc|$ taken to be the standard $\ell_1$ norm.

\section{Conclusions}
\label{sec:conclusions}
We have investigated monotone systems that are also contracting with respect to a weighted $\ell_1$ norm or $\ell_\infty$ norm. In the case of the $\ell_1$ (respectively, $\ell_\infty$) norm, we provided a condition on the weighted column (respectively, row) sums of the Jacobian matrix for ensuring contraction. These conditions lead to either sum-separable or max-separable Lyapunov functions. In particular, we introduce a class of separable Lyapunov functions that depend on the value of the vector field along trajectories of the system. These flow separable Lyapunov functions are especially relevant in applications where it is easier to measure the derivative of the system's state rather than measure the state directly.

Paralleling observations made in \cite{Rantzer:2012fj}, verifying \eqref{eq:16} and \eqref{eq:18} requires checking nonpositivity of a collection of $n$ functions. For polynomial or rational vector fields, this is done efficiently using sum-of-squares (SOS) techniques \cite{Parrilo:2000fk}. Future work will consider scalable verification and synthesis methods for contractive monotone systems using SOS techniques. %

\section{Acknowledgements}
The author thanks Murat Arcak for providing valuable feedback on an early draft of this paper.

\appendix
\section{Proof of Theorem \ref{thm:ne}}
To prove Theorem \ref{thm:ne}, we prove a more general result for which Theorem \ref{thm:ne} is a special case.   Consider
\begin{equation}
  \label{eq:101}
  \dot{x}=f(t,x)
\end{equation}
for $x\in \Domain$. We assume that $f(t,x)$ is periodic in $t$ with period $T$. We further assume $f(t,x)$ is differentiable in $x$ and that $f(t,x)$ and the Jacobian $J(t,x)\triangleq \frac{\partial f}{\partial x}(t,x)$ are continuous in $(t,x)$.

\begin{thm}
\label{thm:appendix}
Let $K\subseteq \Domain$ be convex and forward-invariant and suppose $\gamma(t)$ be a periodic trajectory of \eqref{eq:101} with period $T$. If 
\begin{align}
  \label{eq:102}
  \mu(J(t,x))\leq 0\quad \forall t\geq 0, \forall x\in K
\end{align}
and there exists $t^*$ such that
\begin{align}
  \label{eq:59}
  \mu(J(t^*,\gamma(t^*)))<0,
\end{align}
then for all trajectories $x(t)=\phi(t,x_0)$, $x_0\in K$ we have
\begin{align}
  \label{eq:104}
  \lim_{t\to \infty} |x(t)-\gamma(t)| = 0.
\end{align}
\end{thm}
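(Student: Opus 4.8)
The plan is to combine a global nonexpansion estimate with a compactness argument that extracts a limiting trajectory, and then to contradict the strict inequality \eqref{eq:59} at a single point. First I would record the time-varying analogue of the estimate underlying Proposition \ref{thm:contract}: for any two solutions $x(t),y(t)$ of \eqref{eq:101} with values in the convex set $K$, writing
\[
A(t) := \int_0^1 J\bigl(t,\, y(t) + s(x(t)-y(t))\bigr)\,ds,
\]
so that $f(t,x(t))-f(t,y(t)) = A(t)\bigl(x(t)-y(t)\bigr)$, the upper Dini derivative obeys $D^+|x(t)-y(t)| \le \mu(A(t))\,|x(t)-y(t)|$. Convexity of $K$ places the whole segment between $x(t)$ and $y(t)$ in $K$, and since the matrix measure is subadditive and positively homogeneous (hence convex, giving $\mu(\int_0^1 M\,ds) \le \int_0^1 \mu(M)\,ds$), hypothesis \eqref{eq:102} yields $\mu(A(t)) \le 0$. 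Thus every pairwise distance is nonincreasing. Applying this to $x(t)$ and $\gamma(t)$ shows $V(t) := |x(t)-\gamma(t)|$ is nonincreasing, so $V(t)\downarrow V_\infty$ for some $V_\infty \ge 0$, and the goal reduces to proving $V_\infty = 0$.

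Next I would build a limiting trajectory realizing the distance $V_\infty$ exactly. Because $V$ is nonincreasing, the samples $x(kT)$ stay in the compact set $\{\,y\in K : |y-\gamma(0)|\le V(0)\,\}$, so along a subsequence $x(k_jT)\to \bar x\in K$. By $T$-periodicity of $f$, each shift $y_j(t):=x(t+k_jT)$ again solves \eqref{eq:101} with $y_j(0)=x(k_jT)$, and $\gamma(t+k_jT)=\gamma(t)$. Continuous dependence on initial conditions gives $y_j\to\bar\phi$ locally uniformly, where $\bar\phi$ solves \eqref{eq:101} with $\bar\phi(0)=\bar x$; moreover $|y_j(t)-\gamma(t)| = V(t+k_jT)\to V_\infty$. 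Hence $|\bar\phi(t)-\gamma(t)| = V_\infty$ for all $t\ge 0$: the limit trajectory maintains a constant distance $V_\infty$ from $\gamma$.

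The heart of the argument is to contradict this constancy via \eqref{eq:59}. Suppose $V_\infty>0$; using periodicity of $f$ and $\gamma$ I may assume $t^*>0$, so $\bar\phi(t^*)$ is defined. Continuity of $x\mapsto\mu(J(t^*,x))$ gives $\delta>0$ and $r>0$ with $\mu(J(t^*,x))\le-\delta$ whenever $|x-\gamma(t^*)|\le r$. Setting $s_0:=\min\{1,\,r/V_\infty\}$, for $s\in[0,s_0]$ the point $\gamma(t^*)+s(\bar\phi(t^*)-\gamma(t^*))$ lies within $r$ of $\gamma(t^*)$ and lies in $K$ by convexity; therefore $\mu(A(t^*)) \le -\delta s_0 < 0$, since the integrand is $\le-\delta$ on $[0,s_0]$ and $\le 0$ elsewhere. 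But $|\bar\phi(t)-\gamma(t)|\equiv V_\infty$ forces $D^+|\bar\phi-\gamma|(t^*)=0$, while the Dini estimate gives $0 = D^+|\bar\phi-\gamma|(t^*)\le \mu(A(t^*))\,V_\infty < 0$, a contradiction. Hence $V_\infty=0$, i.e. $|x(t)-\gamma(t)|\to 0$.

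The main obstacle is precisely the gap between where contraction is strict and where the trajectory lives: \eqref{eq:59} is strict only at the single point $\gamma(t^*)$ and only locally in the state, whereas $x(t^*)$ may be far from $\gamma(t^*)$, so the strict estimate cannot be applied to $x$ directly. Passing to the subsequential limit $\bar\phi$, which sits at \emph{exactly} constant distance $V_\infty$, is what resolves this: it lets the connecting segment be forced through the neighborhood of strict contraction by taking $s$ small, turning the isolated strict inequality into a decrease that a constant distance cannot accommodate. The remaining care is routine — ensuring $\bar x\in K$ with $\bar\phi$ defined near $t^*$, and invoking convexity of $K$ so the segment stays in $K$ where \eqref{eq:102} applies.
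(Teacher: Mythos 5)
Your proof is correct, but it follows a genuinely different route from the paper's. The paper argues quantitatively through the Poincar\'e map $P(\xi)=\phi(T,\xi)$: continuity of $J$ turns the pointwise hypothesis \eqref{eq:59} into a tube of uniform contraction around $\gamma(t)$ for $t$ in some interval $[0,\tau]$, and a case split (initial condition inside or outside the $\epsilon$-ball around $\gamma(0)$) shows that each period either shrinks the distance by the factor $e^{-c\tau}$ or decreases it by the fixed amount $\delta=(1-e^{-c\tau})\epsilon$; the ``far'' case uses your segment idea, but applies it to an auxiliary solution started from the point of the segment at distance $\epsilon$, so the triangle inequality yields the explicit per-period estimate \eqref{eq:105} and hence a convergence rate, with no limiting arguments at all. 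Your proof is instead a soft, invariance-principle-style argument: monotone convergence of $V(t)=|x(t)-\gamma(t)|$, a subsequential limit of the samples $x(k_jT)$, continuous dependence to produce a limiting solution $\bar\phi$ at exactly constant distance $V_\infty$, and a contradiction between that constancy and the strictly negative measure of the averaged Jacobian $A(t^*)$ (via convexity of $\mu$, i.e.\ Jensen). What your route buys is conceptual economy---strictness is used only at the single time $t^*$ and only in a spatial neighborhood, with no per-period bookkeeping---at the price of compactness and continuous-dependence machinery and the loss of any rate information. Two points you should patch, both routine: (i) the theorem does not assume $K$ closed, so $\bar x$ lies a priori only in $\overline{K}$; pass to $\overline{K}$, which is convex, forward invariant by continuous dependence, contained in the closed set $\Domain$, and on which \eqref{eq:102} still holds by continuity of $J$ and Lipschitz continuity of $\mu$; (ii) justify that $\bar\phi$ exists beyond $t^*$, which follows because the shifted solutions, and hence $\bar\phi$, remain in the bounded tube of radius $V(0)$ around the compact orbit $\gamma$.
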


\newcommand{\Ball}{\mathcal{B}}
\begin{proof}
  Without loss of generality, assume $t^*=0$. Condition \eqref{eq:59} and continuity of $J(t,x)$ imply there exists $\epsilon>0$, $c>0$, and $0<\tau\leq T$ such that
  \begin{align}
    \label{eq:60}
\mu(J(t,y))\leq -c   \qquad \forall t\in[0,\tau], \ \forall y\in B_\epsilon(\gamma(t))
  \end{align}
where $\Ball_\epsilon(y)=\{z:|z-y|\leq \epsilon\}$ denotes the closed ball of radius $\epsilon$ centered at $y\in\mathbb{R}^n$.   Define the mapping
\begin{align}
  \label{eq:108}
  P(\xi)=\phi(T,\xi)
\end{align}
and observe that $P^k(\xi)=\phi(kT,\xi)$. Let $\gamma^*=\gamma(0)$ and note that $\gamma^*$ is a fixed point of $P$. 
Consider a point $\xi\in\Ball_\epsilon(\gamma^*)$. Let $x(t)=\phi(t,\xi)$ and note that $|P(\xi)-\gamma^*|=|x(T)-\gamma(T)|$.
By \eqref{eq:102}, we have $ |x(T)-\gamma(T)|\leq |x(\tau)-\gamma(\tau)|$
and, by \eqref{eq:60}, we have \cite{Sontag:2010fk}
\begin{align}
  \label{eq:1017}
|x(\tau)-\gamma(\tau)|\leq e^{-c\tau}|x(0)-\gamma(0)|.  
\end{align}

Now consider $\xi\in K$ such that $|\xi-\gamma^*|> \epsilon$ and again let $x(t)=\phi(t,\xi)$. Let $\sigma(r)=r\xi+(1-r)\gamma^*$ be the parameterized line segment connecting $\gamma^*$ and $\xi$, and let $r_\epsilon$ be such that $|\sigma(r_\epsilon)-\gamma^*|=\epsilon$. Note that $|\xi-\gamma^*|=|\xi-\sigma(r_\epsilon)|+|\sigma(r_\epsilon)-\gamma^*|$. Let $s(t)=\phi(t,\sigma(r_\epsilon))$. By \eqref{eq:102}, we have $|x(T)-s(T)|\leq|x(0)-s(0)|= |\xi-\sigma(r_\epsilon)|=|\xi-\gamma^*|-\epsilon$. Furthermore, by the same argument as in the preceding case, we have $|s(T)-\gamma(T)|\leq e^{-c\tau}|\sigma(r_\epsilon)-\gamma^*|=e^{-c\tau}\epsilon$, and thus by the triangle inequality,
\begin{align}
  \label{eq:1012}
  |P(\xi)-\gamma^*&|\leq |x(T)-s(T)|+|s(T)-\gamma(T)|\\
&\leq |\xi-\gamma^*|-(1-e^{-c\tau})\epsilon\\
&=|\xi-\gamma^*|-\delta
\end{align}
where   $\delta:=(1-e^{-c\tau})\epsilon>0$. Then
\begin{align}
  \label{eq:105}
  |P(\xi)-\gamma^*|\leq 
  \begin{cases}
    |\xi-\gamma^*|-\delta&\text{if }|\xi-\gamma^*|>\epsilon\\
    e^{-c\tau}|\xi-\gamma^*|&\text{if }|\xi-\gamma^*|\leq \epsilon.
  \end{cases}
\end{align}
It follows that for all $\xi$, $|P^k(\xi)-\gamma^*|\leq \epsilon$ for some finite $k$ (in particular, for any $k\geq |\xi-\gamma^*|/\delta$). The theorem then follows from the second condition of \eqref{eq:105}.

\end{proof}

Theorem \ref{thm:ne} follows by taking $\gamma(t)\equiv x^*$ and arbitrary $T>0$, and we may take $\tau=T$ in the proof.%

\bibliographystyle{ieeetr}

\end{document}